\title[ ]{ANDERSON LOCALIZATION FOR THE QUANTUM KICKED ROTOR MODEL }
\author{Jia Shi}
\address[Jia Shi]
{School of Mathematical Sciences,
Fudan University,
Shanghai 200433, China} \email{15110180007@fudan.edu.cn}
\author{Xiaoping Yuan}
\address[Xiaoping Yuan]
{School of Mathematical Sciences,
Fudan University,
Shanghai 200433, China} \email{xpyuan@fudan.edu.cn}
\keywords{Anderson localization, the quantum kicked rotor model.}
\theoremstyle{plain}
\newtheorem{thm}{Theorem}[section]
 \newtheorem{lem}[thm]{Lemma}
 \newtheorem{prop}[thm]{Proposition}
 \newtheorem{rem}[thm]{Remark}
 \numberwithin{equation}{section}
\begin{document}


\begin{abstract}
In this paper, we establish Anderson localization for the quantum kicked rotor model.
More precisely, we proved that

\begin{equation*}
H=\tan\pi\left(x_0+my_0+\frac{m(m-1)}{2}\omega\right) \delta_{mn}+\epsilon S_\phi
\end{equation*}
has pure point spectrum with exponentially decaying eigenfunctions
for almost all $\omega \in DC$ (diophantine condition).

\end{abstract}
\maketitle
\section{Introduction and main result}

Anderson localization for quasi-periodic Schr\"odinger operators is an important topic in both physics and mathematics.
For example, we can study
\begin{equation}\label{1}
H=v_n\delta_{nn'}+\Delta,
\end{equation}
where $v_n$ is a quasi-periodic potential and $\Delta$ is the lattice Laplacian on $\mathbb{Z}$
\begin{equation*}
\Delta(n,n')=1, |n-n'|=1,\quad \Delta(n,n')=0, |n-n'|\neq 1.
\end{equation*}

Anderson localization means that $H$ has pure point spectrum with exponentially decaying eigenfunctions.
Since there are many papers on this topic, we only mention some results here.
For more about dynamics and spectral theory of quasi-periodic Schr\"odinger-type operators, see the survey \cite{MJ}.

We may associate the potential $v_n$ to a dynamical system $T$ as follows:
\begin{equation}\label{2}
v_n=\lambda v(T^nx),
\end{equation}
where $v$ is a nonconstant real analytic potential on $\mathbb{T}$ and $T$ is a shift
\begin{equation}\label{3}
Tx=x+\omega.
\end{equation}
Fix $x=x_0$, Bourgain and Goldstein \cite{BG} proved that if $\lambda >\lambda_0$, for almost all $\omega$,
$H$ will satisfy Anderson localization. Their argument is based on a combination of large deviation estimates and general facts on semi-algebraic sets.
The method in \cite{BG} depends explicitly on the fundamental matrix and Lyapounov exponent.
By multi-scale method, Bourgain, Goldstein and Schlag \cite{BGS2}  proved Anderson localization for Schr\"odinger operators on $\mathbb{Z}^{2}$
\begin{equation*}
H(\omega_1,\omega_2;\theta_1,\theta_2)=\lambda v(\theta_1+n_1\omega_1,\theta_2+n_2\omega_2)+\Delta.
\end{equation*}
Later, Bourgain \cite{B07} proved Anderson localization for quasi-periodic lattice Schr\"odinger operators on $\mathbb{Z}^{d}$, $d$ arbitrary.
Recently, using more elaborate semi-algebraic arguments, Bourgain and Kachkovskiy \cite{BK}
proved Anderson localization for two interacting quasi-periodic particles.

More generally, we can consider the long range model
\begin{equation}\label{4}
H=v(x+n\omega)\delta_{nn'}+\epsilon S_\phi,
\end{equation}
where $S_\phi$ is a Toeplitz operator
\begin{equation*}
S_\phi(n,n')=\hat{\phi}(n-n')
\end{equation*}
and $v$ is real analytic, nonconstant on $\mathbb{T}$.
Assume $\phi$  real analytic satisfying
\begin{equation}\label{5}
  |\hat{\phi}(n)|<e^{-\rho|n|}, \quad\forall n \in \mathbb{Z}
\end{equation}
for some $\rho>0$, Bourgain \cite{B05} proved that there is $\epsilon_{0}=\epsilon_{0}(\rho)>0$, such that if
$0<\epsilon<\epsilon_{0}$, $H$ satisfies Anderson localization.
Note that in the long range case, we cannot use the fundamental matrix formalism.
The method in \cite{B05} can also be used to establish Anderson localization for band Schr\"odinger operators \cite{BJ}
\begin{equation*}
H_{(n,s),(n',s')}(\omega,\theta)=\lambda v_s(\theta+n\omega)\delta_{nn'}\delta_{ss'}+\Delta,
\end{equation*}
where $\{v_s|1\leq s\leq b\}$ are real analytic, nonconstant on $\mathbb{T}$.
Recently, this method was used to prove Anderson localization for
the long-range quasi-periodic block operators \cite{JSY}
\begin{equation*}
(H(x)\vec{\psi})_n=\epsilon\sum_{k\in\mathbb{Z}}W_k\vec{\psi}_{n-k}+V(x+n\omega)\vec{\psi}_n,
\end{equation*}
where
$$V(x)=\mathrm{diag} (v_1(x),\ldots,v_l(x)),$$
$v_i(x)\  (1\leq i\leq l)$ are nonconstant real analytic functions on $\mathbb{T}$ and $W_k\  (k\in\mathbb{Z})$
are $l\times l$ matrices satisfying $W_k^*=W_{-k}$, $\|W_k\|\leq e^{-\rho|k|},\rho>0$.
For the Anderson localization results of long-range quasi-periodic operators on $\mathbb{Z}^{d}$, we refer to \cite{CD}, \cite{JLS}.

Now, let $T$ be the skew shift on $\mathbb{T}^{2}$:
\begin{equation}\label{6}
T(x_{1},x_{2})=(x_{1}+x_{2},x_{2}+\omega),
\end{equation}
using transfer matrix and Lyapounov exponent, Bourgain, Goldstein and Schlag \cite{BGS1} proved Anderson localization for
\begin{equation}\label{7}
H=\lambda v(T^nx)+\Delta.
\end{equation}

In order to study the quantum kicked rotor equation
\begin{equation}\label{8}
i\frac{\partial\Psi(t,x)}{\partial t}=a\frac{\partial^{2}\Psi(t,x)}{\partial x^{2}}+ib\frac{\partial\Psi(t,x)}{\partial x}+V(t,x)\Psi(t,x),
\end{equation}
\begin{equation}\label{9}
V(t,x)=\kappa\left[\sum_{n\in\mathbb{Z}}\delta(t-n)\right]\cos2\pi x, 
\end{equation}
Bourgain \cite{B02} considered the lattice Schr\"odinger operator
\begin{equation}\label{10}
H(x)=v(T^mx)\delta_{mn}+\phi_{m-n}(T^mx)+\overline{\phi_{n-m}(T^nx)},
\end{equation}
where $v$ is a real, nonconstant, trigonometric polynomial, $\phi_k$ are trigonometric polynomials and $T$ is the skew shift on $\mathbb{T}^{2}$.
Using multi-scale method, Bourgain proved Anderson localization for the operator (\ref{10}).

In \cite{B02}, the quantum kicked rotor model is reduced to the monodromy operator
\begin{equation}\label{11}
W=e^{i\left(a\frac{d^{2}}{dx^{2}}+ib\frac{d}{dx}\right)}e^{i\kappa\cos2\pi x}.
\end{equation}
In this paper, we will study another reduction of the quantum kicked rotor model, which leads to an operator with simpler form.
More precisely, we consider
\begin{equation}\label{12}
H(x)=\tan\pi(T^mx)_{1}\delta_{mn}+\epsilon S_\phi,
\end{equation}
where $T$ is the skew shift on $\mathbb{T}^{2}$ and $(T^mx)_{1}$ refers to the first coordinate of $T^mx$. To make the operator (\ref{12}) well defined, we will always assume
\begin{equation}\label{13}
(T^mx)_{1}-\frac{1}{2} \notin \mathbb{Z},\quad \forall m \in \mathbb{Z}.
\end{equation}
The model (\ref{12}) is so-called the quantum kicked rotor model proposed by Fishman, Grempel and Prange, see Equation (3) in \cite{FGP}. As for this model, Bourgain \cite{B05} (p.120) remarked as follows:

\vskip0.2cm

{\it  ``This reduction is different from ours and leads to an operator with simpler form.
However, one has to deal with the singularity of the $tg$ function. It is likely that the
method explained in Chapters 14 and 15 also may be adapted to establish localization
results for (16.23)." }

\vskip0.2cm

In the present paper, we will fulfill Bourgain's idea as the above. More exactly, we will prove the following result.
\begin{thm}
Consider a lattice operator $H_{\omega}(x)$ associated to the skew shift $T=T_{\omega}$ of the form (\ref{12}).
Assume $\omega \in DC$ (diophantine condition)
\begin{equation}\label{14}
\|k\omega\|>c| k |^{-2},\quad \forall k\in\mathbb{Z}\setminus\{0\}
\end{equation}
and $\phi$  real analytic satisfying
\begin{equation}\label{15}
  |\hat{\phi}(n)|<e^{-\rho|n|},\quad \forall n \in \mathbb{Z}
\end{equation}
for some $\rho>0$. Fix $x_0\in\mathbb{T}^{2}$. Then for almost all $\omega \in DC$  and $\epsilon$ taken sufficiently small,
$H_{\omega}(x_0)$ satisfies Anderson localization.
\end{thm}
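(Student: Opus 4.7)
The plan is to adapt Bourgain's multi-scale / large deviation framework for long range skew-shift operators (Chapters 14 and 15 of \cite{B05}, together with \cite{BGS1}), with the main new difficulty being the singularity of $\tan\pi(\cdot)$. I would begin by preprocessing the unboundedness of the potential. Since $(T^m x_0)_1 = x_0 + m y_0 + \frac{m(m-1)}{2}\omega$, a Borel--Cantelli argument based on the Diophantine condition (\ref{14}) lets us remove a small-measure set of parameters so that on its complement
$$\bigl\| (T^m x_0)_1 - \tfrac{1}{2} \bigr\| > |m|^{-C}$$
for all $m\neq 0$ and some $C>0$ (with $\|\cdot\|$ the distance to $\mathbb{Z}$). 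This yields $|\tan\pi(T^m x_0)_1| < |m|^{C}$; in particular $\log|v\circ T^m|$ grows at most logarithmically along the orbit, a loss that is negligible on every exponential scale.

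Next, I would establish a large deviation estimate (LDT) for the finite-volume averages
$$\frac{1}{N} \sum_{m=0}^{N-1} \log\bigl| \tan\pi(T^m x)_1 - E \bigr|.$$
The crucial analytic input is that $\tan\pi x_1 - E$ is a ratio of two nontrivial trigonometric polynomials, so $\log|\tan\pi x_1 - E|$ is subharmonic in a horizontal strip (after accounting for its logarithmic singularities) and lies in $L^1(\mathbb{T}^2)$. The Cartan and plurisubharmonic estimates of Bourgain--Goldstein--Schlag then apply and give sub-exponentially small measure bounds for the deviation set, uniformly in $E$ modulo the exclusion of phases near the singular locus performed in the previous step.

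With the LDT in hand, the multi-scale induction runs in the usual way: at scale $N$ one shows that for most $x$ outside a semi-analytic exceptional set and for most energies $E$, the finite-volume Green's function $G_N(E;m,n) = (H_N(x) - E)^{-1}(m,n)$ decays off-diagonally as $e^{-\gamma|m-n|}$ for $|m|,|n|\leq N/2$. The inductive step combines the resolvent identity, a Schur-complement decomposition adapted to the long range perturbation $\epsilon S_\phi$, the LDT above, and the elimination of doubly resonant boxes via semi-algebraic (now semi-analytic) set arguments in the spirit of \cite{BGS1}. Anderson localization at the fixed phase $x_0$ then follows from the standard generalized-eigenfunction argument once finite-volume exponential decay holds at all scales with summable measure bounds.

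The main obstacle is combining three intrinsically delicate ingredients: (i) the skew-shift LDT is much weaker than in the pure shift case, yielding only sub-exponential decay of the exceptional measure; (ii) the long range Toeplitz term $\epsilon S_\phi$ rules out transfer-matrix methods and enlarges the effective geometry of resonances, so one must work throughout with full Green's functions; and (iii) the tangent singularity produces unbounded diagonal entries and logarithmic losses in every LDT. Tracking these logarithmic losses cleanly through all scales of the induction, showing that they can be absorbed into the small parameter $\epsilon$, and extending Bourgain's semi-algebraic machinery to the semi-analytic setting forced by $\tan$, will constitute the bulk of the technical work.
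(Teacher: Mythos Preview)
Your proposal misses the paper's central device and, as a result, anticipates difficulties that the paper simply avoids. The key observation is the factorization
\[
H_{[0,N]}(x)-E \;=\; D_{[0,N]}(x)\,B_{[0,N]}(x),
\]
where $D$ is diagonal with entries $\sqrt{1+(\epsilon\hat\phi(0)-E)^2}\,/\cos\pi(T^m x)_1$ and $B$ has diagonal entries
\[
B_{mm}(x)=\frac{\sin\pi(T^m x)_1+(\epsilon\hat\phi(0)-E)\cos\pi(T^m x)_1}{\sqrt{1+(\epsilon\hat\phi(0)-E)^2}}=\cos\pi\bigl((T^m x)_1-\alpha(E)\bigr)
\]
and off-diagonal entries bounded by $\epsilon e^{-\rho|m-n|}$, \emph{uniformly in $E$}. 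Since $\|D^{-1}\|\le 1$, one has $|G_{[0,N]}(m,n)|\le|B_{[0,N]}^{-1}(m,n)|$, so the entire singularity of $\tan$ and the unboundedness of $E$ are absorbed in one algebraic step. The matrix $B$ is bounded, has trigonometric-polynomial entries, and is a small perturbation of a diagonal with $\cos$ entries; the standard multi-scale lemma (Lemma~3.16 of \cite{B02}) applies to it directly---the paper notes that self-adjointness is not actually used there---together with the skew-shift equidistribution lemma and the usual semi-algebraic machinery. No Borel--Cantelli control on the orbit near $1/2$, no logarithmic losses to track through the induction, and no extension to ``semi-analytic'' sets are needed.

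By contrast, your route hinges on an LDT for $\frac{1}{N}\sum\log|\tan\pi(T^m x)_1-E|$ feeding a multi-scale scheme. Two concerns: first, the Cartan/plurisubharmonic LDTs you cite from \cite{BGS1} are built on transfer-matrix products, which you yourself note are unavailable here; a Birkhoff-sum LDT for the skew shift with an unbounded, $L^1$ observable and uniform control in $E\in\mathbb R$ is not standard and would have to be proved from scratch. Second, for long-range operators Bourgain's method (Chapters~14--15 of \cite{B05}, and \cite{B02}) does not use LDT as the inductive input at all---it bootstraps Green's function bounds directly via the coupling lemma and semi-algebraic bad-site counting. So even if your LDT were established, it would not slot into the induction in the way you sketch. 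Your program might be pushed through, but the paper's factorization makes all of these issues disappear and reduces the problem to the bounded long-range setting already handled in \cite{B02,B05}.
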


In the long range case here, the transfer matrix formalism is not applicable.
Our basic strategy is the same as that in \cite{B02}, but as mentioned above, the main difficulty is that
the potential $\tan$ is an unbounded function (i.e. of singularity) and the operator $H$ is unbounded.

In order to prove Anderson localization, we need Green's function estimates for
\begin{equation}\label{16}
G_{[0,N]}(x,E)=(R_{[0,N]}(H(x)-E)R_{[0,N]})^{-1},
\end{equation}
where $R_{\Lambda}$ is the restriction operator to $\Lambda\subset\mathbb{Z}$.
Note that
\begin{equation}\label{17}
R_{[0,N]}(H(x)-E)R_{[0,N]}=D(x)B(x),
\end{equation}
where
\begin{equation}\label{18}
D(x)=\mathrm{diag}\left(\frac{1}{\cos\pi x_{1}},\ldots,\frac{1}{\cos\pi (T^{N}x)_{1}}\right).
\end{equation}
Hence
\begin{equation}\label{19}
G_{[0,N]}(x,E)=B(x)^{-1}D(x)^{-1}.
\end{equation}
Since in $D(x)^{-1}$, the singularity $\frac{1}{\cos}$ vanishes, we only need Green's function estimates for $B(x)^{-1}$.
We need to point out that $B(x)$ is not self-adjoint. Fortunately, we find that multi-scale analysis still applies to this case.
Since the operator $H$ is unbounded and the energy $E$ is unbounded, we use the specific property of trigonometric functions to overcome
the difficulty of the unboundedness of the energy $E$.


We summarize the structure of this paper.
First, we will prove Green's function estimates in Section 2.
Then we recall some facts about semi-algebraic sets in Section 3 and give the proof of Anderson localization
in Section 4.

We will use the following notations. For positive numbers $a,b,a\lesssim b$ means $Ca\leq b$ for some constant $C>0$.
$a\ll b$ means $C$ is large. $a\sim b$ means $a\lesssim b$ and $b\lesssim a$. $N^{1-}$ means $N^{1-\epsilon}$ with some small $\epsilon>0$.
For $x\in\mathbb{R}$, $\| x\|=\inf\limits_{m\in\mathbb{Z}}|x-m|$,
for $x=(x_1, x_2)\in\mathbb{T}^{2}$,  $\| x\|=\| x_{1}\|+\| x_{2}\|$ .

\section{Green's function estimates}

In this section, we will prove the Green's function estimates using multi-scale analysis in \cite{B02}.

We need the following lemma.

\begin{lem}[Lemma 3.16 in \cite{B02}]\label{l2.1}

Let $A(x)=\{A_{mn}(x)\}_{1\leq m,n\leq N}$ be a matrix-valued function on $\mathbb{T}^{d}$ such that
\begin{equation}\label{b1}
\mbox{$A(x)$ is self-adjoint for $x\in\mathbb{T}^{d}$},
\end{equation}
\begin{equation}\label{b2}
\mbox{$A_{mn}(x)$ is a trigonometric polynomial of degree $<N^{C_{1}}$},
\end{equation}
\begin{equation}\label{b3}
|A_{mn}(x)|<C_2e^{-c_2|m-n|},
\end{equation}
where $c_2,C_1,C_2>0$ are constants.

Let $0<\delta<1$ be sufficiently small, $M=N^{\delta^{6}},\ L_0=N^{\frac{1}{100}\delta^{2}},\ 0<c_3<\frac{1}{10}c_2.$

Assume that for any interval $I\subset[1,N]$ of size $L_0$, except for $x$ in a set of measure $<e^{-L_0^{\delta^{3}}}$,
\begin{equation}\label{b4}
\|(R_IA(x)R_I)^{-1}\|<e^{L_0^{1-}},
\end{equation}
\begin{equation}\label{b5}
|(R_IA(x)R_I)^{-1}(m,n)|<e^{-c_3|m-n|},\quad m,n\in I,|m-n|>\frac{L_0}{10}.
\end{equation}

For fixed $x\in\mathbb{T}^{d}, n_0\in[1,N]$ is called a good site if $I_0=\left[n_0-\frac{M}{2},n_0+\frac{M}{2}\right]\subset[1,N]$ and
\begin{equation}\label{b6}
\|( R_{I_{0}}A(x)R_{I_{0}})^{-1}\|<e^{M^{1-}},
\end{equation}
\begin{equation}\label{b7}
| (R_{I_{0}}A(x)R_{I_{0}})^{-1}(m,n)|<e^{-c_3|m-n|},\quad m,n\in I_0,|m-n|>\frac{M}{10}.
\end{equation}

Denote $\Omega(x)\subset[1,N]$ the set of bad sites.
Assume that for any interval $J\subset[1,N]$ such that $|J|>N^{\frac{\delta}{5}}$, we have
\begin{equation}\label{b8}
|J\cap\Omega(x)|<|J|^{1-\delta}.
\end{equation}
Then
\begin{equation}\label{b9}
\| A(x)^{-1}\|<e^{N^{1-\frac{\delta}{C(d)}}},
\end{equation}
\begin{equation}\label{b10}
| A(x)^{-1}(m,n)|<e^{-c'_3|m-n|},\quad |m-n|>\frac{N}{10}
\end{equation}
except for $x$ in a set of measure $<e^{-\frac{N^{\delta^{2}}}{C(d)}}$, where $C(d)$ is a constant depending on $d$ and $c_{3}'>c_{3}-(\log N)^{-8}$.
\end{lem}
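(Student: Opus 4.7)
The plan is to invert $A(x)$ by a resolvent-type paving argument on two scales: $M$-intervals around good sites and $L_0$-intervals to fill in the remaining (bad) regions, exploiting the sparsity condition (b8) to ensure that bad sites cluster too thinly to spoil the overall estimate.

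First I would construct a cover of $[1,N]$ by intervals $\{I_\alpha\}$ where each $I_\alpha$ is either an $M$-interval $[n_0-M/2,n_0+M/2]$ centered at a good site, or an $L_0$-interval engulfing a connected cluster of bad sites. The hypothesis (b8) at scale $|J|=N^{\delta/5}$ forces each maximal cluster of bad sites to have length $\lesssim N^{(\delta/5)(1-\delta)} \ll L_0$, so each bad cluster (with a safety buffer) fits inside a single $L_0$-interval; I would arrange the cover so that every point of $[1,N]$ is in $O(1)$ intervals. On the $M$-intervals (b6)--(b7) hold by definition, and on the $L_0$-intervals (b4)--(b5) hold off a small exceptional set in $x$. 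From this cover I would build an approximate inverse $T = \sum_\alpha \chi_\alpha (R_{I_\alpha}A R_{I_\alpha})^{-1}\rho_\alpha$, where $\rho_\alpha$ is a partition of unity and $\chi_\alpha$ is supported on a slightly smaller subset of $I_\alpha$. Using the off-diagonal decay (b3) of $A$, one has $AT = I + K$, where $K(m,n)$ is supported on pairs with $m$ near a boundary of some $\chi_\alpha$ and is exponentially damped by $e^{-c_2 \operatorname{dist}(m,n)}$.

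Next I would verify $\|K\|<1/2$ so that $A^{-1} = T(I+K)^{-1} = T - TK + TK^2 - \cdots$. Each entry of $K$ pays a local-inverse prefactor of at most $e^{M^{1-}}$ or $e^{L_0^{1-}}$, but the distance from an interior point to the boundary is at least $M/2$ or $L_0/2$, which beats those prefactors by the choice of parameters; summing over boundary points gives the norm bound (b9) with exponent $N^{1-\delta/C(d)}$. For the off-diagonal decay (b10) I would iterate: each application of $K$ forces the path of indices to hop through an interval boundary, contributing a factor $e^{-c_2 M/2}$ (or $e^{-c_2 L_0/2}$ across a bad region) offset by the local norm factor. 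For $|m-n|>N/10$, the number of forced hops is of order $|m-n|/M$, which after bookkeeping of polynomially many path choices yields decay $e^{-c_3'|m-n|}$ with $c_3' > c_3 - (\log N)^{-8}$.

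For the measure estimate, the only $x$-dependent input is (b4)--(b5) at scale $L_0$. Since $A_{mn}$ is a trigonometric polynomial of degree $<N^{C_1}$ and there are at most $N$ choices of $I\subset[1,N]$ of length $L_0$, the total exceptional set has measure $<N \cdot e^{-L_0^{\delta^3}}$, which is $<e^{-N^{\delta^2}/C(d)}$ for $C(d)$ large enough since $L_0 = N^{\delta^2/100}$. The main obstacle in carrying this out is the off-diagonal decay after many resolvent iterations: each hop costs an uncontrolled prefactor $e^{M^{1-}}$, and one must verify that the forced number of hops across distance $|m-n|$ is large enough to absorb these factors without eroding $c_3$ by more than $(\log N)^{-8}$. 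This is exactly why the scale hierarchy $L_0 = N^{\delta^2/100}$, $M=N^{\delta^6}$ and the sparsity exponent in (b8) are chosen in the specific relation stated; any weaker sparsity or smaller $M$ would destroy the delicate balance between the exponential loss from local norms and the exponential gain from the distance $|m-n|$.
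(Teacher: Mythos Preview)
The paper does not prove this lemma; it is quoted from \cite{B02} (Lemma~3.16 there) and used as a black box. So there is no in-paper argument to compare against, and your outline must be judged on its own merits. It contains a genuine gap.

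The central error is the inequality ``$N^{(\delta/5)(1-\delta)}\ll L_0$''. In fact $L_0=N^{\delta^2/100}$ and for small $\delta$ one has $\delta^2/100\ll(\delta/5)(1-\delta)\approx\delta/5$, so bad clusters, which can be as large as $N^{(\delta/5)(1-\delta)}$, are vastly \emph{larger} than $L_0$, not smaller. A single $L_0$-interval therefore cannot engulf a bad cluster, and the two-scale cover $\{I_\alpha\}$ you describe cannot be built. The same arithmetic slip recurs in your measure estimate: the union bound gives $N\cdot e^{-L_0^{\delta^3}}=N\cdot e^{-N^{\delta^5/100}}$, and since $\delta^5/100\ll\delta^2$ this is nowhere near the claimed $e^{-N^{\delta^2}/C(d)}$; no constant $C(d)$ rescues it.

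What your outline never uses in an essential way is hypothesis (\ref{b2}), and that is exactly the missing ingredient. In Bourgain's proof the bad region is covered by intervals $\Lambda_\alpha$ at an \emph{intermediate} scale (well above $L_0$), and on each $\Lambda_\alpha$ the $L_0$-scale input (\ref{b4})--(\ref{b5}) is combined with a Cartan-type estimate: the entries of $(R_{\Lambda_\alpha}A(x)R_{\Lambda_\alpha})^{-1}$ are ratios of determinants which, by (\ref{b2}), are trigonometric polynomials of controlled degree, and Cartan's lemma converts an average lower bound on $\log|\det|$ into a measure bound of the required strength $e^{-N^{\delta^2}/C(d)}$. This step simultaneously handles bad clusters larger than $L_0$ and produces the sharp measure exponent; a pure resolvent-paving argument at the two scales $M$ and $L_0$ cannot do either.
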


We also need the following ergodic property of skew shifts on $\mathbb{T}^{2}$.

\begin{lem}[Lemma 15.21 in \cite{B05}]\label{l2.2}

Assume $\omega \in DC$, $T=T_{\omega}$ is the skew shift on $\mathbb{T}^{2}$, $\epsilon>L^{-\frac{1}{10}}$. Then
\begin{equation*}
\#\left\{n=1,\ldots,L\Big\lvert\|T^nx-a\|<\epsilon\right\}<C\epsilon^{2}L.
\end{equation*}

\end{lem}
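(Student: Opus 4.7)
The plan is to use Fourier analysis on $\T^{2}$ to reduce the counting problem to the estimation of exponential sums over the skew-shift orbit, and then to apply Weyl's inequality for quadratic sums in conjunction with the Diophantine condition on $\omega$.

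First I would replace the indicator of the ball $\{y:\|y-a\|<\epsilon\}$ by a non-negative smooth majorant $\chi_{\epsilon}:\T^{2}\to\R$ with $\int\chi_{\epsilon}\lesssim\epsilon^{2}$ and Fourier coefficients satisfying $|\hat{\chi}_{\epsilon}(k)|\lesssim\epsilon^{2}(1+\epsilon|k|)^{-N}$ for some large $N$. The count is then bounded above by
\[
\sum_{n=1}^{L}\chi_{\epsilon}(T^{n}x)=\sum_{k\in\Z^{2}}\hat{\chi}_{\epsilon}(k)\,S_{k}(x),\qquad S_{k}(x)=\sum_{n=1}^{L}e^{2\pi i k\cdot T^{n}x}.
\]
The $k=0$ term yields the main contribution $\hat{\chi}_{\epsilon}(0)L\lesssim\epsilon^{2}L$, and it remains to show that the $k\neq 0$ terms together contribute $o(\epsilon^{2}L)$.

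For $k\neq 0$, the phase $k\cdot T^{n}x=k_{1}\bigl(x_{1}+nx_{2}+\tfrac{n(n-1)}{2}\omega\bigr)+k_{2}(x_{2}+n\omega)$ is a quadratic polynomial in $n$ with leading coefficient $k_{1}\omega/2$. When $k_{1}=0$ the sum is a geometric series in $nk_{2}\omega$, bounded by $\|k_{2}\omega\|^{-1}\lesssim|k_{2}|^{2}$ via the Diophantine condition. When $k_{1}\neq 0$ I apply Weyl's inequality for quadratic exponential sums: the bound $\|k_{1}\omega\|>c|k_{1}|^{-2}$ precludes sharp rational approximations of the leading coefficient up to denominators polynomial in $L$, giving an estimate of the form $|S_{k}|\lesssim L^{1-\eta}|k|^{C}$ for some $\eta>0$ and $C>0$ independent of $k$ and $x$.

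Summing, the Fourier decay of $\chi_{\epsilon}$ effectively truncates to $|k|\lesssim\epsilon^{-1}$ up to negligible error, and the total contribution of the non-zero modes is bounded by $\epsilon^{2}\cdot\epsilon^{-C-2}L^{1-\eta}=\epsilon^{-C}L^{1-\eta}$. Under the assumption $\epsilon>L^{-1/10}$ this is much smaller than $\epsilon^{2}L$, yielding the claimed bound. The main obstacle is the Weyl estimate for the quadratic exponential sum with Diophantine leading coefficient: obtaining a power gain $L^{-\eta}$ with only polynomial dependence on $|k|$, so that the Fourier decay of the bump can absorb the $k$-loss, is precisely what the quantitative hypothesis $\|k\omega\|>c|k|^{-2}$ is tailored to furnish.
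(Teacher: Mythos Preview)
The paper does not prove this lemma itself; it is quoted from Bourgain's monograph \cite{B05} (Lemma~15.21) and used as a black box, with Remark~\ref{r2.3} recording that only the truncated Diophantine condition $\|k\omega\|>c|k|^{-2}$ for $0<|k|\le L$ is actually needed.

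Your outline is exactly the argument given in \cite{B05}, and it is correct: majorize the indicator by a smooth bump on $\T^{2}$, expand in Fourier modes, and control the non-zero modes $S_{k}$ by the geometric-series bound $|S_{k}|\le\|k_{2}\omega\|^{-1}\lesssim|k_{2}|^{2}$ when $k_{1}=0$, and by Weyl's inequality for quadratic sums when $k_{1}\neq 0$. In the latter case the Diophantine hypothesis forces the Dirichlet denominator $q$ associated to the leading coefficient $k_{1}\omega/2$ to satisfy $q\gtrsim L^{1/2}|k_{1}|^{-O(1)}$, which gives $|S_{k}|\lesssim L^{3/4+}|k_{1}|^{O(1)}$; summing over $|k|\lesssim\epsilon^{-1}$ then makes the total error $O(\epsilon^{2}L)$ precisely under the assumption $\epsilon>L^{-1/10}$, which is why that threshold appears in the statement. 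One small correction: at the borderline $\epsilon\sim L^{-1/10}$ the error is only \emph{comparable} to $\epsilon^{2}L$, not $o(\epsilon^{2}L)$ as you wrote, but that is all the inequality claims.
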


\begin{rem}\label{r2.3}

In the proof of Lemma \ref{l2.2}, we only need to assume
\begin{equation*}
  \|k\omega\|>c|k|^{-2},\quad \forall 0<|k|\leq L.
\end{equation*}

\end{rem}

By Lemma \ref{l2.1}, Lemma \ref{l2.2}, we can prove the Green's function estimates.

\begin{prop}\label{p2.4}

Let $T=T_{\omega}$ be the skew shift and
\begin{equation}\label{b11}
H_{mn}(x)=\tan\pi(T^mx)_{1}\delta_{mn}+\epsilon S_\phi.
\end{equation}
Assume $\phi$  real analytic satisfying
\begin{equation}\label{b12}
  |\hat{\phi}(n)|<e^{-\rho|n|},\quad \forall n \in \mathbb{Z}
\end{equation}
for some $\rho>0$ and $\omega$ satisfying
\begin{equation}\label{b13}
   \| k\omega\|>c|k|^{-2}, \quad \forall 0<|k|\leq N,
\end{equation}
$\epsilon$  is small.
Then for energy $E$,
\begin{equation}\label{b14}
   \| G_{[0,N]}(x,E)\| <e^{N^{1-}},
\end{equation}
\begin{equation}\label{b15}
  |G_{[0,N]}(x,E)(m,n)|<e^{-\frac{\rho}{100}|m-n|},\quad  0\leq m,n\leq N, |m-n|>\frac{N}{10}
\end{equation}
for $x\notin \Omega_{N}(E)$, where
 \begin{equation}\label{b16}
 {\rm mes} \Omega_{N}(E)<e^{-N^{\sigma}},\quad\sigma>0.
 \end{equation}

\end{prop}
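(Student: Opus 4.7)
The plan is to follow the factorization (\ref{17})--(\ref{19}), which reduces the Green's function estimate to the invertibility of $B(x)$, since $\|D(x)^{-1}\|\leq 1$. Two features distinguish this problem from the one treated in \cite{B02}: the diagonal of $B(x)$ depends on the possibly large energy $E$, and $B(x)$ is not self-adjoint, so the multi-scale machinery of Lemma~\ref{l2.1} does not apply verbatim. To remove the $E$-dependence, I would use the identity
\begin{equation*}
\sin\pi y-E\cos\pi y=\sqrt{1+E^{2}}\,\sin(\pi y-\alpha),\qquad\tan\alpha=E,
\end{equation*}
and work with the rescaled matrix $\tilde B(x):=(1+E^{2})^{-1/2}B(x)$. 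Its diagonal is then the bounded trigonometric expression $\sin(\pi(T^{m}x)_{1}-\alpha)$, while the off-diagonal entries satisfy $|\tilde B_{mn}(x)|\leq\epsilon|\hat\phi(m-n)|\leq\epsilon e^{-\rho|m-n|}$. The energy has been absorbed into a parameter shift $\alpha=\alpha(E)\in\T$.

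The initial-scale estimate at $L_{0}=N^{\frac{1}{100}\delta^{2}}$ would then be perturbative. By Lemma~\ref{l2.2} applied with $a$ chosen so that its first coordinate equals $\alpha/\pi$, and with $\tau$ a suitable small negative power of $L_{0}$, for $x$ off a set of measure $<e^{-L_0^{\delta^3}}$ the set $\{m\in I:|\sin(\pi(T^{m}x)_{1}-\alpha)|<\tau\}$ has cardinality $\leq|I|^{1-\delta'}$ on every subinterval $I\subset[0,N]$ of length $L_{0}$. On the complementary good sites a Neumann series in the off-diagonal perturbation (whose operator norm is $\lesssim\epsilon$) yields invertibility, and the bad sites are absorbed by a standard resolvent identity exploiting the exponential decay of $\hat\phi$. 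This produces hypotheses (\ref{b4})--(\ref{b5}) of Lemma~\ref{l2.1} for $\tilde B$ at scale $L_{0}$.

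The principal remaining obstacle is the failure of self-adjointness in hypothesis (\ref{b1}). To circumvent it, after truncating $\hat\phi$ to Fourier modes of frequency $\leq N^{C_{1}}$ (with exponentially small error in $\hat\phi$), I would apply Lemma~\ref{l2.1} to the self-adjoint matrix $\tilde B^{\ast}\tilde B$, which inherits a banded structure with exponentially decaying trigonometric-polynomial entries, so that (\ref{b1})--(\ref{b3}) hold. The initial-scale bounds for $\tilde B$ transfer to $\tilde B^{\ast}\tilde B$ with only polynomial loss in the operator-norm bound and an arbitrarily small loss in the exponential decay rate, and the bad-site count (\ref{b8}) follows by propagating the Lemma~\ref{l2.2} bound through the induction on scales. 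Lemma~\ref{l2.1} then gives (\ref{b9})--(\ref{b10}) for $\tilde B^{\ast}\tilde B$; the operator-norm bound (\ref{b14}) follows from $\|\tilde B^{-1}\|^{2}=\|(\tilde B^{\ast}\tilde B)^{-1}\|$, and the entry bounds (\ref{b15}) are recovered by writing $\tilde B^{-1}=(\tilde B^{\ast}\tilde B)^{-1}\tilde B^{\ast}$ and using that both factors are exponentially banded. Transferring back through $G_{[0,N]}(x,E)=B(x)^{-1}D(x)^{-1}$ and using $\|D(x)^{-1}\|\leq 1$ completes the plan. The most delicate point is keeping enough room in the exponential rate through the passage $\tilde B\leadsto\tilde B^{\ast}\tilde B\leadsto\tilde B^{-1}$ to land at the asserted constant $\rho/100$; this is the place where, if one prefers, one can instead revisit Bourgain's proof of Lemma~\ref{l2.1} directly and verify that self-adjointness is only used in auxiliary steps admitting non-self-adjoint analogues.
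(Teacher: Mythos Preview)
Your factorization and rescaling by $\sqrt{1+E^{2}}$ (the paper uses $\sqrt{1+(\epsilon\hat\phi(0)-E)^{2}}$, which is the same idea) match the paper exactly, and this is indeed the key device for absorbing the unbounded energy. From there, however, your plan diverges from the paper in ways that introduce unnecessary difficulties and one genuine gap.

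On self-adjointness: the paper does precisely what you mention as an afterthought---it simply observes that hypothesis (\ref{b1}) is never used in the proof of Lemma~\ref{l2.1}, and applies that lemma directly to $B_{[0,N]}(x)$. Your proposed detour through $\tilde B^{\ast}\tilde B$ is not only heavier but has a commutation problem: $R_{I}(\tilde B^{\ast}\tilde B)R_{I}\neq(R_{I}\tilde B R_{I})^{\ast}(R_{I}\tilde B R_{I})$, so the initial-scale bounds for $\tilde B$ on $I$ do not transfer automatically to the restricted $\tilde B^{\ast}\tilde B$ on $I$. This could perhaps be repaired using off-diagonal decay, but it is avoidable.

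On the initial scale: the paper's argument is much simpler than yours. One uses the elementary fact that $\mathrm{mes}\{x\in[0,1]:|\cos\pi x|<\eta\}<\eta$ together with the measure-preserving property of $T$ to conclude that, off a set of measure $<N_{0}\epsilon_{0}$, \emph{every} diagonal entry $B_{mm}(x)$, $0\le m\le N_{0}$, is bounded below by $\epsilon_{0}$; then a Neumann series in $\epsilon\ll\epsilon_{0}$ gives (\ref{b26}). No bad sites, no resolvent identity, no Lemma~\ref{l2.2}. Your proposed use of Lemma~\ref{l2.2} here is in fact misapplied: that lemma controls returns of $T^{n}x$ to a single point $a\in\mathbb{T}^{2}$, not to the strip $\{y:|y_{1}-\alpha/\pi|<\tau\}$; and it is a deterministic statement valid for all $x$, so it does not by itself produce the exceptional set of measure $<e^{-L_{0}^{\delta^{3}}}$ you claim.

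The real gap is in the inductive step. To verify the bad-site hypothesis (\ref{b8}) of Lemma~\ref{l2.1}, the paper shows that the bad set $\Omega_{0}\subset\mathbb{T}^{2}$ at scale $M$ is a semi-algebraic set of degree $\leq M^{6}$ (by writing the Green's function bound as a ratio of determinants in truncated $\cos,\sin$), then uses Proposition~\ref{p3.2} to cover $\Omega_{0}$ by at most $M^{C}r^{-1}$ balls of radius $r$, and only then applies Lemma~\ref{l2.2} to each ball. Your sentence ``the bad-site count (\ref{b8}) follows by propagating the Lemma~\ref{l2.2} bound through the induction on scales'' omits exactly this semi-algebraic covering, without which Lemma~\ref{l2.2} cannot be brought to bear on the inductively defined set $\Omega_{0}$. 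This is the essential missing idea.
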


\begin{proof}

Write
\begin{equation}\label{b17}
 H_{[0,N]}(x)-E=D_{[0,N]}(x)B_{[0,N]}(x),
\end{equation}
where
\begin{equation}\label{b18}
D_{mn}(x) =\frac{\sqrt{1+(\epsilon\hat{\phi}(0)-E)^{2}}}{\cos\pi(T^mx)_{1}} \delta_{mn},
\end{equation}
\begin{equation}\label{b19}
B_{mm}(x) =\frac{1}{\sqrt{1+(\epsilon\hat{\phi}(0)-E)^{2}}}\left[\sin\pi(T^mx)_{1}+(\epsilon\hat{\phi}(0)-E)\cos\pi(T^mx)_{1}\right],
\end{equation}
\begin{equation}\label{b20}
 B_{mn}(x) =\frac{ \epsilon\hat{\phi}(m-n)\cos\pi(T^mx)_{1}}{\sqrt{1+(\epsilon\hat{\phi}(0)-E)^{2}}}, \quad m\neq n.
\end{equation}

We will apply Lemma \ref{l2.1} to $B_{[0,N]}(x)$. Note that $B_{[0,N]}(x)$ is not self-adjoint.
However, in the proof of Lemma \ref{l2.1}, we don't need (\ref{b1}).
Since
\begin{equation}\label{b21}
T^m(x_1,x_2)=\left(x_1+mx_2+\frac{m(m-1)}{2}\omega,x_2+m\omega\right),
\end{equation}
$B_{mn}(x)$ is a trigonometric polynomial of degree $<|m|$.
(\ref{b3}) holds with $C_2=1, c_2=\rho$.

We need to prove
\begin{equation}\label{b22}
 {\rm mes}\left[x\in\mathbb{T}^{2}\Big\lvert|B_{[0,N]}(x)^{-1}(m,n)|>e^{N^{1-}-c_3|m-n|\chi_{|m-n|>\frac{N}{10}}} ,\exists 0\leq m,n\leq N\right]<e^{-N^{\delta^{3}}}
\end{equation}
for some $c_3>\frac{\rho}{100}, 0<\delta<1$.

By
\begin{equation*}
  |\sin\pi x+(\epsilon\hat{\phi}(0)-E)\cos\pi x|=\sqrt{1+(\epsilon\hat{\phi}(0)-E)^{2}}\left|\cos\pi(x-\alpha)\right|, \quad 0<\alpha<1,
\end{equation*}
using the fact
\begin{equation*}
   {\rm mes}\left[x\in[0,1]\Big\lvert |\cos\pi x|<\eta\right]<\eta, \quad  \forall 0<\eta<1,
\end{equation*}
we have
\begin{equation}\label{b23}
{\rm mes}\left[x\in[0,1]\Big\lvert \frac{1}{\sqrt{1+(\epsilon\hat{\phi}(0)-E)^{2}}}|\sin\pi x+(\epsilon\hat{\phi}(0)-E)\cos\pi x|<\epsilon_{0}\right]<\epsilon_{0}.
\end{equation}
Since $T$ is a measure-preserving transformation,
\begin{equation}\label{b24}
{\rm mes}\left[x\in\mathbb{T}^{2}\Big\lvert\frac{1}{\sqrt{1+(\epsilon\hat{\phi}(0)-E)^{2}}}|\sin\pi(T^mx)_{1}+(\epsilon\hat{\phi}(0)-E)\cos\pi(T^mx)_{1}|<\epsilon_{0}\right]<\epsilon_{0}.
\end{equation}
Hence
\begin{equation}\label{b25}
{\rm mes}\left[x\in\mathbb{T}^{2}\Big\lvert\min_{0\leq m\leq N_0} |B_{mm}(x)|<\epsilon_{0}\right]<N_0\epsilon_{0}.
\end{equation}

If $\min\limits_{0\leq m\leq N_0} |B_{mm}(x)|>\epsilon_{0}>\epsilon$, take $\epsilon_{0}=e^{-N_0^{\frac{1}{2}}},\epsilon=e^{-N_0}$,
by Neumann expansion and (\ref{b25}), we have
\begin{equation}\label{b26}
 |B_{[0,N_{0}]}(x)^{-1}(m,n)|<e^{N_0^{\frac{1}{2}}-\frac{\rho}{2}|m-n|}, \quad m,n\in[0,N_0]
\end{equation}
except for $x$ in a set of measure $<e^{-cN_0^{\frac{1}{2}}}$.
So, (\ref{b22}) holds for an initial scale $N_0$.

Assume (\ref{b22}) holds up to scale $L_0=N^{\frac{1}{100}\delta^{2}}$, since
\begin{equation}\label{b27}
B_{m+1,n+1}(x)=B_{mn}(Tx),
\end{equation}
(\ref{b4}),(\ref{b5}) will hold for $x$ outside a set of measure at most $e^{-L_0^{\delta^{3}}}$.
Denote $\Omega(x)\subset[0,N]$ the set of bad sites with respect to scale $M=N^{\delta^{6}}$. $n_0\notin\Omega(x)$ means
\begin{equation}\label{b28}
|B_{[0,M]}(T^{n_{0}-\frac{M}{2}}x)^{-1}(m,n)|=\left|B_{[n_{0}-\frac{M}{2},n_{0}+\frac{M}{2}]}(x)^{-1}\left(m+n_{0}-\frac{M}{2},n+n_{0}-\frac{M}{2}\right)\right|
\end{equation}
\begin{equation*}
<e^{M^{1-}-c_3|m-n|\chi_{|m-n|>\frac{M}{10}}}, \quad m,n\in[0,M].
\end{equation*}

From the inductive hypothesis, we have
\begin{align}\label{b29}
  |B_{[0,M]}(x)^{-1}(m,n)|<e^{M^{1-}-c_3|m-n|\chi_{|m-n|>\frac{M}{10}}}, \quad m,n\in[0,M]
\end{align}
for $x\notin\Omega_{0}, {\rm mes}\Omega_{0}<e^{-M^{\delta^{3}}}$.
By (\ref{b28}), (\ref{b29}), Lemma \ref{l2.1}, we only need to show that for any $x\in\mathbb{T}^{2},\ N^{\frac{\delta}{5}}<L<N$,
\begin{align}\label{b30}
  \#\{1\leq n\leq L|T^nx\in\Omega_{0}\}<L^{1-\delta}.
\end{align}

Expressing (\ref{b29}) as a ratio of determinants and replacing $\cos, \sin$ by truncated power series,
$\Omega_{0}$ may be viewed as a semi-algebraic set of degree at most $M^{6}$.
(For properties of semi-algebraic sets, see Section 3.)
If $r>e^{-\frac{1}{2}M^{\delta^{3}}}$, by Proposition \ref{p3.2},  $\Omega_{0}$
may be covered by at most $M^{C}\left(\frac{1}{r}\right) r$-balls.
Choosing $r=L^{-\frac{1}{20}}>N^{-1}>e^{-\frac{1}{2}M^{\delta^{3}}}$, using Lemma \ref{l2.2}, Remark \ref{r2.3}, we have
\begin{equation*}
  \#\{1\leq n\leq L|T^nx\in\Omega_{0}\}<M^{C}\left(\frac{1}{r}\right)r^{2}L<L^{C\delta^{5}+1-\frac{1}{20}}<L^{1-\delta}.
\end{equation*}
This proves (\ref{b30}) and (\ref{b22}).

By (\ref{b17}),
\begin{align}\label{b31}
G_{[0,N]}(x,E)= (H_{[0,N]}(x)-E)^{-1}=B_{[0,N]}(x)^{-1}D_{[0,N]}(x)^{-1},
\end{align}
hence
\begin{align}\label{b32}
G_{[0,N]}(x,E)(m,n)=\frac{\cos\pi(T^nx)_{1}}{\sqrt{1+(\epsilon\hat{\phi}(0)-E)^{2}}}B_{[0,N]}(x)^{-1}(m,n), \quad m,n\in[0,N].
\end{align}

By (\ref{b31}), (\ref{b32}),
\begin{align}\label{b33}
\|G_{[0,N]}(x,E)\|\leq\|B_{[0,N]}(x)^{-1}\|,
\end{align}
\begin{align}\label{b34}
|G_{[0,N]}(x,E)(m,n)|\leq|B_{[0,N]}(x)^{-1}(m,n)|, \quad m,n\in[0,N].
\end{align}
Proposition \ref{p2.4} follows from (\ref{b22}), (\ref{b33}), (\ref{b34}).
\end{proof}

\section{Semi-algebraic sets}

We recall some basic facts of semi-algebraic sets in this section, which is needed in Section 4. Let $\mathcal{P}=\{P_1,\ldots,P_s\}\subset\mathbb{R}[X_1,\ldots,X_n]$
be a family of real polynomials whose degrees are bounded by $d$.
A semi-algebraic set is given by
\begin{equation}\label{ss}
S=\bigcup_{j}\bigcap_{l\in L_{j}}\left\{\mathbb{R}^{n}\Big\lvert P_ls_{jl}0\right\},
\end{equation}
where $L_{j}\subset\{1,\ldots,s\},s_{jl}\in\{\leq,\geq,=\}$ are arbitrary.
We say that $S$ has degree at most $sd$ and its degree is the $\inf$ of $sd$ over all representations as in (\ref{ss}).

We need the following quantitative version of the Tarski-Seidenberg principle.

\begin{prop}[\cite{BPR}]\label{p3.1}
Let $S\subset\mathbb{R}^{n}$ be a semi-algebraic set of degree $B$, then any projection of $S$ is semi-algebraic of degree at most $B^{C}, C=C(n)$.
\end{prop}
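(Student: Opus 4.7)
The plan is to reduce the general linear projection $\pi:\mathbb{R}^n\to\mathbb{R}^k$ to a finite composition of coordinate eliminations and to track how the degree of the set changes under one such step. The full bound $B^{C(n)}$ will emerge by iterating a single-step estimate at most $n$ times.

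First I would assume without loss of generality that $\pi$ eliminates the last coordinate, $\pi(X_1,\ldots,X_n)=(X_1,\ldots,X_{n-1})$, and fix a quantifier-free description of $S$ as a Boolean combination of sign conditions on polynomials $P_1,\ldots,P_s\in\mathbb{R}[X_1,\ldots,X_n]$ of individual degree at most $d$, with $sd\le B$. A point $x'\in\mathbb{R}^{n-1}$ lies in $\pi(S)$ precisely when the univariate polynomials $P_i(x',\,\cdot\,)$ on the fiber $\{x'\}\times\mathbb{R}$ realize, at some real value of $X_n$, a sign vector admissible for the formula defining $S$. Since the sign pattern on the fiber only changes at the real zeros of the $P_i(x',\,\cdot\,)$, the whole picture is captured by a finite combinatorial type: the multiplicities and relative ordering of these real roots, together with the signs of the $P_j$ in between.

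Next I would invoke the standard sign-determination machinery. The combinatorial type above is a function of the signs of an enlarged family $\mathcal{Q}\subset\mathbb{R}[X_1,\ldots,X_{n-1}]$ consisting of the $X_n$-coefficients of the $P_i$ (of degree $\le d$), the discriminants $\operatorname{Disc}_{X_n}P_i$, and the (sub)resultants $\operatorname{Res}_{X_n}(P_i,\partial_{X_n}^{\,\alpha}P_j)$ for $0\le\alpha<d$ and $1\le i,j\le s$. Each member of $\mathcal{Q}$ has degree $O(d^2)$ in $n-1$ variables, and $|\mathcal{Q}|$ is polynomial in $s$ and $d$. Thom's lemma combined with the subresultant calculus guarantees that, uniformly in $x'$, the fiber sign pattern is determined by the sign vector realized by $\mathcal{Q}$ at $x'$. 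Consequently $\pi(S)$ admits a quantifier-free description using only sign conditions on $\mathcal{Q}$, and so is semi-algebraic of degree at most $B^{C_1}$ for a dimensional constant $C_1$.

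Iterating this single-variable elimination at most $n$ times yields the claimed bound $B^{C(n)}$ with $C(n)\le C_1^{\,n}$. The principal obstacle is the sign-determination step itself: one must establish a uniform statement that the fiberwise sign behavior of $(P_1,\ldots,P_s)$ depends \emph{only} on the signs of the augmented family $\mathcal{Q}$ at the base point, and not on finer arithmetic information. I would treat this as a black box drawn from the Basu--Pollack--Roy framework rather than reprove it here; once it is in place, the degree bookkeeping is essentially mechanical, since resultants and discriminants of degree-$d$ univariate polynomials have degree $O(d^2)$ in their coefficients, and compounding this $n$ times yields the exponent $C(n)$.
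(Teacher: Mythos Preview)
Your sketch is a reasonable outline of the standard proof via iterated coordinate elimination, subresultants, and Thom-type sign determination, and it correctly identifies the Basu--Pollack--Roy machinery as the source of the uniform sign-determination step. However, the paper does not prove this proposition at all: it is stated with a citation to \cite{BPR} and used as a black box. There is therefore nothing to compare against; the authors simply import the quantitative Tarski--Seidenberg principle from the literature rather than reproving it.

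If you wish to include a proof sketch, yours is broadly correct in spirit, though two remarks are in order. First, the bound $|\mathcal{Q}|$ polynomial in $s,d$ together with degree $O(d^2)$ per polynomial only gives a single-step degree bound of the form $(sd)^{C_1}$ after some care, since the number of polynomials also enters the degree of the projected set; you should make explicit that both $s$ and $d$ are controlled after one elimination. Second, iterating $n$ times yields $C(n)$ of doubly exponential type in $n$ if one is not careful (as in classical CAD); the sharper $B^{C(n)}$ with $C(n)$ polynomial in $n$ requires the more refined block-elimination arguments of Basu--Pollack--Roy rather than naive iteration. Since the paper only needs \emph{some} $C(n)$, either version suffices for the application, but you should be aware of the distinction.
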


We also need the following fact.

\begin{prop}[Corollary 9.6 in \cite{B05}]\label{p3.2}
 Let $S\subset[0,1]^{n}$ be semi-algebraic of degree $B$.
 Let $\epsilon>0, \ {\rm mes}_nS<\epsilon^{n}$. Then $S$ may be covered by at most $B^{C}(\frac{1}{\epsilon})^{n-1} \epsilon$-balls.
\end{prop}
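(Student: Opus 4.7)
The plan is to prove the estimate by induction on $n$, with Proposition 3.1 providing control on the degrees of auxiliary semi-algebraic sets (projections and slices) that arise at each step.

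\textbf{Base case $n=1$.} A semi-algebraic subset of $[0,1]$ of degree $B$ is a disjoint union of at most $O(B)$ intervals $I_1,\ldots,I_k$ whose total length equals $\mathrm{mes}_1(S) < \epsilon$. Covering each $I_i$ by at most $1 + \mathrm{mes}(I_i)/\epsilon$ many $\epsilon$-intervals and summing gives a cover by at most $O(B) + 1 \leq B^C$ many $\epsilon$-balls.

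\textbf{Inductive step.} Write points of $[0,1]^n$ as $(y,t)$ with $y\in[0,1]^{n-1}$, $t\in[0,1]$, and set $S_t = \{y:(y,t)\in S\}$. Each $S_t$ is semi-algebraic in $[0,1]^{n-1}$ of degree $\leq B$, and Fubini gives $\int_0^1 \mathrm{mes}_{n-1}(S_t)\,dt = \mathrm{mes}_n(S) < \epsilon^n$. By Markov, the "bad" set
\begin{equation*}
T = \{t\in[0,1] : \mathrm{mes}_{n-1}(S_t) \geq \epsilon^{n-1}\}
\end{equation*}
satisfies $\mathrm{mes}_1(T) < \epsilon$. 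A cell decomposition of $S$ together with Proposition 3.1 shows that $t\mapsto \mathrm{mes}_{n-1}(S_t)$ is piecewise semi-algebraic of complexity $\leq B^{C(n)}$, so $T$ itself is semi-algebraic of degree $\leq B^{C(n)}$. The base case then covers $T$ by $\leq B^C$ many $\epsilon$-intervals.

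Over these intervals, the slab $[0,1]^{n-1}\times(\text{cover of }T)$ is covered by the trivial $\epsilon$-grid, contributing at most $B^C\cdot\epsilon^{-(n-1)}$ many $n$-dimensional $\epsilon$-balls. On $[0,1]\setminus T$, partition into at most $\epsilon^{-1}$ intervals $I$ of length $\epsilon$. For each such $I$, the inductive hypothesis applied to a representative slice $S_{t_I}$ (of $(n-1)$-measure $<\epsilon^{n-1}$ and degree $\leq B$) yields a cover by $\leq B^C \epsilon^{-(n-2)}$ many $(n-1)$-dimensional $\epsilon$-balls, which thicken in the $t$-direction to $n$-dimensional $\epsilon$-balls covering the slab $S\cap([0,1]^{n-1}\times I)$. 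Summing over the $\epsilon^{-1}$ choices of $I$ produces at most $B^C \epsilon^{-(n-1)}$ balls, completing the induction.

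\textbf{Main obstacle.} The delicate point is ensuring that a single-slice cover actually covers the entire slab $S\cap([0,1]^{n-1}\times I)$ after $\epsilon$-thickening. This requires uniform control on how the fibers $S_t$ vary as $t$ ranges over $I$, and is where the semi-algebraic structure becomes essential: applying cylindrical algebraic decomposition to $S$ adapted to the projection $(y,t)\mapsto t$ (with $\leq B^{C(n)}$ cells, by Proposition 3.1) shows that on each cell the endpoints of the intervals of $S_t$ are continuous semi-algebraic functions of $t$, so an $\epsilon$-cover at one parameter value suffices for nearby parameters up to $\epsilon$-enlargement. The cell count $B^{C(n)}$ and the projection-degree blow-up are both absorbed into the final factor $B^C$ in the claimed bound.
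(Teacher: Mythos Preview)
The paper does not prove Proposition~3.2; it is quoted verbatim as Corollary~9.6 of Bourgain's book \cite{B05} and used as a black box. So there is no ``paper's proof'' to compare against. I will therefore assess your argument on its own.

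Your inductive scheme is natural, but the step you yourself flag as the main obstacle is not actually closed. You assert that, on each cell of a cylindrical decomposition adapted to $(y,t)\mapsto t$, the boundary functions of $S_t$ are continuous semi-algebraic functions of $t$, and conclude that an $\epsilon$-cover of one slice $S_{t_I}$, after $\epsilon$-thickening, covers the whole slab $S\cap([0,1]^{n-1}\times I)$ with $|I|=\epsilon$. Continuity is not enough for this: you need the boundary functions to move by at most $O(\epsilon)$ when $t$ moves by $\epsilon$, i.e.\ a Lipschitz bound, and semi-algebraic functions of degree $\leq B^C$ can have arbitrarily large local Lipschitz constants (think of $t\mapsto t^{1/k}$ near $0$). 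Nothing in your argument bounds the set where these derivatives are large, nor absorbs that set into the ``bad'' part $T$. A secondary issue is that the claim that $t\mapsto\mathrm{mes}_{n-1}(S_t)$ is semi-algebraic of degree $\leq B^{C(n)}$ is true but nontrivial, and you should at least cite the relevant result.

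The gap is repairable, but not with the tools you invoke. The standard route in \cite{B05} goes through the Yomdin--Gromov algebraic lemma (Theorem~9.4 there), which reparametrizes $S$ by at most $B^{C(n)}$ maps $[0,1]^{d}\to[0,1]^n$ with all first derivatives bounded by $1$; the $\epsilon$-covering number of each piece is then controlled directly, and the small measure forces the full-dimensional pieces to contribute at most $\epsilon^{-(n-1)}$ balls each. If you want to salvage your induction, you would need to replace the bare cylindrical decomposition by such a $C^1$-reparametrization with derivative bounds, which is precisely the content of Yomdin--Gromov.
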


Finally, we will use the following lemma.

\begin{lem}[Lemma 15.26 in \cite{B05}]\label{l3.3}

 Let $S\subset\mathbb{T}^{3}$ be a semi-algebraic set of degree $B$ such that
 \begin{equation*}
{\rm mes}S<e^{-B^{\sigma}},\quad \sigma>0.
 \end{equation*}
Let $M$ be an integer satisfying
\begin{equation*}
\log\log M\ll\log B\ll\log M.
\end{equation*}
Then for any fixed $x_{0}\in \mathbb{T}^{2}$,
\begin{equation*}
{\rm mes}[\omega\in\mathbb{T}\lvert(\omega,T^{j}_{\omega}x_{0})\in S,\ \exists j\sim M]<M^{-c},\quad c>0
\end{equation*}
where $T_{\omega}$ is the skew shift with frequency $\omega$.

\end{lem}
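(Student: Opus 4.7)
My plan is to bound the bad set via a ball-covering argument: cover $S$ with small balls using Proposition~\ref{p3.2}, then for each ball and each $j \sim M$ estimate the $\omega$-measure landing in the ball, and sum.

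Choose $\epsilon > (\mathrm{mes}\,S)^{1/3}$; this is permitted because $\log\log M \ll \log B$ forces $B^{\sigma} \gg \log M$ and hence leaves $\epsilon$ free to be as small as any prescribed negative power of $M$ times a positive power of $B$. Proposition~\ref{p3.2} applied to $S \subset \mathbb{T}^{3}$ then produces a cover by $N \leq B^{C_{1}} \epsilon^{-2}$ balls $B(z_{k}, \epsilon)$ with centres $z_{k} = (\omega_{k}, y_{k}) \in \mathbb{T} \times \mathbb{T}^{2}$.

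For each $k$ and each $j \sim M$, using the closed form
\[
T_{\omega}^{j} x_{0} = \Bigl( x_{0,1} + j x_{0,2} + \tfrac{j(j-1)}{2}\omega,\ x_{0,2} + j\omega \Bigr) \pmod{1},
\]
the defining constraints of $\mu_{k,j} := \mathrm{mes}\{\omega : |\omega - \omega_{k}|<\epsilon,\ \|T_{\omega}^{j} x_{0} - y_{k}\|<\epsilon\}$ become three affine $\epsilon$-congruences in $\omega$ with slopes $1$, $j(j-1)/2$ and $j$. Equivalently, the curve $\omega \mapsto (\omega, T_{\omega}^{j} x_{0})$ lifts to a line of integer direction $(1, j(j-1)/2, j)$ and length of order $j^{2}$ in $\mathbb{R}^{3}$, and a Minkowski-type count of $\mathbb{Z}^{3}$-translates of the ball meeting this line yields at most $C \max(1, j^{2}\epsilon^{2})$ crossings, each of $\omega$-length $\leq C\epsilon/j^{2}$, so that $\mu_{k,j} \lesssim \epsilon^{3} + \epsilon/j^{2}$. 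Summing over $k \leq N$ and $j \sim M$ and optimising $\epsilon$ gives a total bad-$\omega$ measure of order $B^{C_{1}} M^{-\kappa}$ for some $\kappa > 0$; using $B^{C_{1}} = M^{o(1)}$ (from $\log B \ll \log M$) this is $\leq M^{-c}$ for some positive $c$.

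The main obstacle is obtaining the right per-ball estimate: the direction vector $(1, j(j-1)/2, j)$ is rational, so $\omega \mapsto (\omega, T_{\omega}^{j} x_{0})$ is a \emph{closed} geodesic in $\mathbb{T}^{3}$ that fails to equidistribute, and boundary and edge contributions in the intersection of the three arithmetic progressions can a~priori produce spurious cross terms (such as $\epsilon^{2}/j$) that are not absorbed by the two main summands. Handling these requires a careful case split based on the relative sizes of $j\epsilon$, $j^{2}\epsilon^{2}$ and $1$, together with the primitivity of the direction vector through Minkowski's theorem, and may well require averaging over $j \sim M$ to push the worst cases into a semi-algebraic exceptional set of much smaller measure. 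The two asymptotic assumptions play complementary roles: $\log\log M \ll \log B$ guarantees $B^{\sigma} \gg \log M$, which converts $\mathrm{mes}(S) \leq e^{-B^{\sigma}}$ into a genuine power saving in $M$, while $\log B \ll \log M$ makes every $B^{O(1)}$ factor in the covering and in the summation subpolynomial in $M$.
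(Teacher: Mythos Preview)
The paper does not supply its own proof of this lemma; it is quoted as Lemma~15.26 from Bourgain's monograph \cite{B05}, so there is no in-paper argument to compare against beyond the citation itself.

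Your proposal, however, has a genuine gap at the summation step. With $N \lesssim B^{C_1}\epsilon^{-2}$ balls from Proposition~\ref{p3.2} in dimension~$3$ and your per-ball bound $\mu_{k,j} \lesssim \epsilon^{3} + \epsilon/j^{2}$, the double sum over $k\le N$ and $j\sim M$ is
\[
\sum_{j\sim M}\sum_{k\le N}\mu_{k,j}
\;\lesssim\; M\cdot B^{C_1}\epsilon^{-2}\bigl(\epsilon^{3}+\epsilon M^{-2}\bigr)
\;=\; B^{C_1}\Bigl(M\epsilon + (M\epsilon)^{-1}\Bigr)
\;\ge\; 2B^{C_1}
\]
for \emph{every} choice of $\epsilon>0$; no optimisation yields $B^{C_1}M^{-\kappa}$. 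Even for a single $j$ the bound is only $\sim B^{C_1}M^{-1}$, and the union over $\sim M$ values of $j$ exactly cancels that gain. The difficulty you flag about ``spurious cross terms'' in the lattice-point count is real but secondary: the obstruction is already present in the main terms of the bookkeeping $N\cdot M\cdot\mu$.

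What is missing is the product structure $\mathbb{T}_{\omega}\times\mathbb{T}^{2}$. Since the first coordinate of $(\omega,T_\omega^{j}x_0)$ is $\omega$ itself, the condition reads $T_\omega^{j}x_0\in S_\omega$, where the section $S_\omega\subset\mathbb{T}^2$ is again semi-algebraic of degree $\le B$. Bourgain's argument in \cite{B05} exploits a separation of scales: restricting $\omega$ to an interval of length intermediate between $M^{-1}$ and $B^{-C}$, the section $S_\omega$ is essentially frozen while $\omega\mapsto T_\omega^{j}x_0$ still sweeps a long curve in $\mathbb{T}^2$ (length $\sim j^{2}|I|\gg 1$). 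Covering the \emph{two}-dimensional section costs only $B^{C}\epsilon^{-1}$ balls, one power of $\epsilon^{-1}$ fewer than your $\mathbb{T}^3$ cover, and this recovered power is precisely what turns the estimate into an honest $M^{-c}$. The hypothesis $\log B\ll\log M$ is used to make the freezing legitimate, and $\log\log M\ll\log B$ to make $\mathrm{mes}\,S_\omega$ negligible against any power of $M$. A direct ball cover of $S\subset\mathbb{T}^3$ discards this structure and cannot be salvaged by sharpening $\mu_{k,j}$ alone.
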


\section{Proof of Anderson localization}

In this section, we give the proof of Anderson localization as in \cite{BG}.

By application of the resolvent identity, we have the following

\begin{lem}\label{l4.1}

Let $I\subset\mathbb{Z}$ be an interval of size $N$ and $\{I_{\alpha}\}$ subintervals of size $M\ll N$, $N=e^{(\log M)^{2}}$.
Assume $\forall k\in I$, there is some $\alpha$ such that
\begin{equation}\label{c1}
\left[k-\frac{M}{4},k+\frac{M}{4}\right]\cap I\subset I_\alpha
\end{equation}
and $\forall \alpha$,
\begin{equation}\label{c2}
\|G_{I_{\alpha}}\|<e^{M^{1-}},
|G_{I_{\alpha}}(n_1,n_2)|<e^{-\frac{\rho}{100}|n_1-n_2|},  \ n_1,n_2\in I_{\alpha},|n_1-n_2|>\frac{M}{10}.
\end{equation}
Then
\begin{equation}\label{c3}
|G_{I}(n_1,n_2)|<e^{M}, \quad  n_1,n_2\in I ,
\end{equation}
\begin{equation}\label{c4}
|G_{I}(n_1,n_2)|<e^{-\frac{\rho}{200}|n_1-n_2|}, \quad  n_1,n_2\in I ,|n_1-n_2|>\frac{N}{10}.
\end{equation}

\end{lem}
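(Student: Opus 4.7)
The plan is to iterate the resolvent identity, patching the local inverses $G_{I_\alpha}$ into a global inverse on $I$, with each step gaining exponential decay from two sources: the off-diagonal decay of $G_{I_\alpha}$ in (\ref{c2}) and the kernel decay $|\hat\phi(k)|\leq e^{-\rho|k|}$ of the long-range hopping $\epsilon S_\phi$.

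For any $n_1\in I$, use (\ref{c1}) to pick $\alpha=\alpha(n_1)$ with $[n_1-M/4,n_1+M/4]\cap I\subset I_\alpha$, and write
\[
G_I(n_1,n_2)=\chi_{n_2\in I_\alpha}\,G_{I_\alpha}(n_1,n_2)\;-\;\epsilon\!\!\sum_{\substack{m\in I_\alpha\\ m'\in I\setminus I_\alpha}}\!\!G_{I_\alpha}(n_1,m)\,\hat\phi(m-m')\,G_I(m',n_2).
\]
Since $n_1$ lies at distance $\geq M/4$ from $\partial I_\alpha$ inside $I$, every $(m,m')$ in the sum satisfies $|n_1-m|+|m-m'|\geq M/4$. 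Splitting the sum according to whether $|n_1-m|\leq M/10$ or $>M/10$ and using (\ref{c2}) together with the exponential decay of $\hat\phi$, each term contributes a combined weight at most $\epsilon e^{-c\rho M}$ with $c>0$, comfortably dominating the $e^{M^{1-}}$ factor that appears in the ``diagonal'' regime. Summing over the boundary yields a net factor $\lesssim \epsilon e^{-\rho M/400}$ per resolvent step.

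The expansion is then iterated: at each step the new query site $m'$ lies outside $I_\alpha$, so it has advanced by $\Theta(M)$ from the previous one (the exponential weights concentrate the mass of the boundary sum just outside $\partial I_\alpha$). After $s$ iterations the expansion is a sum of ``terminated'' terms bounded by $(\epsilon e^{-\rho M/400})^j e^{M^{1-}}$ for $0\leq j<s$, plus a remainder of the form $(\epsilon e^{-\rho M/400})^s\,G_I(\cdot,n_2)$. For (\ref{c3}) already $s=1$ suffices: the geometric absorption $\|G_I\|\leq 2e^{M^{1-}}+\tfrac{1}{2}\|G_I\|$ (valid once $\epsilon e^{-\rho M/400}<\tfrac{1}{2}$) yields $\|G_I\|<e^M$. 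For (\ref{c4}) substitute the bound just obtained and iterate $s\sim 4|n_1-n_2|/M$ times, so that the accumulated decay $(\epsilon e^{-\rho M/400})^s e^M$ reaches $e^{-(\rho/200)|n_1-n_2|}$; the hypothesis $|n_1-n_2|>N/10\gg M$ (together with $N=e^{(\log M)^2}$) guarantees $s$ is large enough and that the polynomial prefactors from the boundary sums are comfortably absorbed.

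The main technical obstacle is organizing the long-range boundary sum so that the two competing exponential decays (rate $\rho/100$ from $G_{I_\alpha}$ versus rate $\rho$ from $\hat\phi$) combine cleanly to beat the $e^{M^{1-}}$ norm loss at every step. The key point is the triangle inequality $|n_1-m|+|m-m'|\geq M/4$, forced by the deep-site condition (\ref{c1}): it ensures that for every $(m,m')$ in the sum at least one of the two weights supplies a genuine $e^{-cM}$ with $c>0$. Once this per-step gain is secured, absorbing $\|G_I\|$ in (\ref{c3}) is a one-line geometric estimate, and (\ref{c4}) follows by simple iteration.
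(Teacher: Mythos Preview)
Your proposal is correct and follows essentially the same route as the paper: one application of the resolvent identity with the case split on $|n_1-m|$ (the paper splits at $M/8$ rather than $M/10$) gives the geometric absorption inequality for (\ref{c3}), and then iterating the identity---stopping either when the walk lands within $M$ of the target or after $O(N/M)$ steps---yields (\ref{c4}). The only cosmetic difference is that the paper tracks the full path sum $|m-n_1|+\cdots+|n_{t-1}-n_t|$ explicitly and uses the triangle inequality $\geq |m-n_t|$ at termination, whereas you phrase the gain as a per-step factor; both lead to the same bound.
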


\begin{proof}

For $m,n\in I$, there is some $\alpha$ such that
\begin{equation}\label{c5}
\left[m-\frac{M}{4},m+\frac{M}{4}\right]\cap I\subset I_\alpha.
\end{equation}
By resolvent identity,
\begin{equation}\label{c6}
|G_{I}(m,n)|\leq e^{M^{1-}}+\sum_{n_1\in I_{\alpha},n_2\notin I_{\alpha}}|G_{I_{\alpha}}(m,n_1)|e^{-\rho|n_1-n_2|}|G_{I}(n_{2},n)|.
\end{equation}
If $|m-n_1|\leq\frac{M}{8}$, then $|n_1-n_2|\geq\frac{M}{8}$, hence
\begin{equation}\label{c7}
\sum_{|m-n_1|\leq\frac{M}{8},m_2\notin I_{\alpha}}|G_{I_{\alpha}}(m,n_1)|e^{-\rho|n_1-n_2|}\leq M e^{M^{1-}}e^{-\rho\frac{M}{8}}<\frac{1}{4}.
\end{equation}
If $|m-n_1|>\frac{M}{8}$, then $|G_{I_{\alpha}}(m,n_1)|<e^{-\frac{\rho}{100}|m-n_1|}$, hence
\begin{equation}\label{c8}
\sum_{|m-n_1|>\frac{M}{8},m_2\notin I_{\alpha}}|G_{I_{\alpha}}(m,n_1)|e^{-\rho|n_1-n_2|}< e^{-\frac{\rho}{1000}M}<\frac{1}{4}.
\end{equation}
By (\ref{c6}), (\ref{c7}), (\ref{c8}),
\begin{equation}\label{c9}
\max_{m,n\in I}|G_{I}(m,n)|<e^{M^{1-}}+\frac{1}{2}\max_{m,n\in I}|G_{I}(m,n)|.
\end{equation}
(\ref{c3}) follows from (\ref{c9}).

Take $m,n\in I, |m-n|>\frac{N}{10}$, assume (\ref{c5}), by resolvent identity,
\begin{align}\label{c10}
\nonumber|G_{I}(m,n)|&\leq \sum_{n_0\in I_{\alpha},n_1\notin I_{\alpha}}|G_{I_{\alpha}}(m,n_0)|e^{-\rho|n_0-n_1|}|G_{I}(n_1,n)|\\
\nonumber&\leq M \sum_{|m-n_1|>\frac{M}{4}}e^{-\frac{\rho}{100}|m-n_1|}|G_{I}(n_1,n)|\\
& \leq M^{t}\sum_{|m-n_1|>\frac{M}{4},\ldots,|n_{t-1}-n_t|>\frac{M}{4}}e^{-\frac{\rho}{100}(|m-n_1|+\cdots+|n_{t-1}-n_t|)}|G_{I}(n_t,n)|
\end{align}
where $t\leq 10\frac{N}{M}$.

If $|n-n_t|\leq M$, then by (\ref{c3}), (\ref{c10}),
\begin{equation}\label{c11}
 |G_{I}(m,n)|  \leq M^{t}N^{t}e^{M-\frac{\rho}{100}|m-n_{t}|}
 \leq e^{20\frac{N}{M}\log N+2M-\frac{\rho}{100}|m-n|}< e^{-\frac{\rho}{200}|m-n|}.
\end{equation}

If $t= 10\frac{N}{M}$, then by (\ref{c3}), (\ref{c10}),
\begin{equation}\label{c12}
 |G_{I}(m,n)|  \leq M^{t}N^{t}e^{-\frac{\rho}{100}\frac{10N}{M}\frac{M}{4}+M}
\leq e^{20\frac{N}{M}\log N+M-\frac{\rho}{100}2N}<e^{-\frac{\rho}{100}|m-n|}.
\end{equation}

(\ref{c4}) follows from (\ref{c11}), (\ref{c12}). This proves Lemma \ref{l4.1}.
\end{proof}

Now we can prove the main result.

\begin{thm}\label{t4.2}

Consider a lattice operator $H_{\omega}(x)$ associated to the skew shift $T=T_{\omega}$ of the form
\begin{equation}\label{c13}
H_{\omega}(x)=\tan\pi(T^mx)_{1}\delta_{mn}+\epsilon S_\phi.
\end{equation}
Assume $\omega \in DC$ (diophantine condition)
\begin{equation}\label{c14}
\|k\omega\|>c| k |^{-2},\quad \forall k\in\mathbb{Z}\setminus\{0\}
\end{equation}
and $\phi$  real analytic satisfying
\begin{equation}\label{c15}
  |\hat{\phi}(n)|<e^{-\rho|n|},\quad \forall n \in \mathbb{Z}
\end{equation}
for some $\rho>0$. Fix $x_0\in\mathbb{T}^{2}$. Then for almost all $\omega \in DC$  and $\epsilon$ taken sufficiently small,
$H_{\omega}(x_0)$ satisfies Anderson localization.

\end{thm}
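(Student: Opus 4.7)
My plan is to adapt the Bourgain--Goldstein framework from \cite{BG} to the present long-range skew-shift setting, exploiting Proposition \ref{p2.4} in place of the transfer-matrix / large-deviation input used there. By Schnol's theorem, Anderson localization for $H_\omega(x_0)$ follows from the statement that for a.e.\ $\omega \in DC$, every polynomially bounded solution of $H_\omega(x_0)\psi = E\psi$ with $\psi(0) = 1$ and $|\psi(n)| \le (1+|n|)^C$ satisfies $|\psi(n)| \le e^{-c|n|}$ for $|n|$ sufficiently large.

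\textbf{Energy approximation and semi-algebraic elimination.} Fix a large integer $M$, set $N = (\log M)^{C_1}$ and $N_1 = (\log N)^{C_2}$ with $1 \ll C_2 \ll C_1$. Using the eigenvalue equation on $[-N_1, N_1]$ together with the off-diagonal decay (\ref{c15}) and the polynomial bound on $\psi$, the energy $E$ must lie within $e^{-\rho N_1/4}$ of one of the at most $2N_1+1$ real eigenvalues $\tilde{E}_k(\omega)$ of $H_{[-N_1,N_1]}(\omega, x_0)$. For each candidate $E$, Proposition \ref{p2.4} yields a bad set $\Omega_N(\omega, E) \subset \mathbb{T}^2$ of measure $< e^{-N^\sigma}$. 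After truncating $\tan$, $\sin$, $\cos$ and the analytic $\hat\phi$-data to polynomials, and exploiting the normalizations (\ref{b18})--(\ref{b20}) to absorb the unboundedness of $E$, I assemble a single semi-algebraic set $\mathcal{S}_M \subset \mathbb{T} \times \mathbb{T}^2 \times \mathbb{R}$ consisting of triples $(\omega, x, E)$ such that $x \in \Omega_N(\omega, E)$ and $|\det(R_{[-N_1,N_1]}(H_\omega(x_0)-E)R_{[-N_1,N_1]})| < e^{-\rho N_1/2}$; this set has degree $\le N^{C_3}$ and the $(x,E)$-section has measure $< e^{-N^{\sigma/2}}$. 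Proposition \ref{p3.1} (quantifier elimination in $E$) together with Lemma \ref{l3.3} then shows that, outside an $\omega$-set of measure $M^{-c}$, no $j \sim M$ satisfies $(\omega, T^j_\omega x_0, E) \in \mathcal{S}_M$ for any admissible $E$. Borel--Cantelli summed over $M = 2^s$, $s \to \infty$, produces a full-measure good set $\Omega_\infty \subset DC$.

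\textbf{Patching and main obstacle.} For $\omega \in \Omega_\infty$ and $|n| \sim M$, the previous step provides a scale-$N$ window $I_n \ni n$ on which hypothesis (\ref{c2}) of Lemma \ref{l4.1} is satisfied for the actual eigenvalue $E$. Covering a shell around $n$ by $O(M/N)$ such windows and applying the resolvent/Poisson formula as in the proof of (\ref{c4}), boundary contributions are bounded by $|n|^{C}\, e^{-(\rho/200)|n|}$, yielding the desired exponential decay of $\psi$. The main obstacle is the algebraization step: one must simultaneously encode the bad-set condition $x \in \Omega_N(\omega,E)$ and the energy-near-eigenvalue condition as a single semi-algebraic inequality of degree bounded by a small power of $N$, despite $\tan$ being unbounded and $E$ ranging over all of $\mathbb{R}$. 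This requires a careful truncation that exploits the algebraic form of $B_{mn}$ in (\ref{b19})--(\ref{b20}), where the $\sqrt{1+(\epsilon\hat\phi(0)-E)^2}$ normalization already tames the growth in $E$ and reduces the problem, via (\ref{b33})--(\ref{b34}), to uniformly bounded matrix entries to which the semi-algebraic machinery of Section 3 applies.
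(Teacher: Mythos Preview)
Your overall architecture---Schnol reduction, semi-algebraic elimination via Proposition~\ref{p3.1} and Lemma~\ref{l3.3}, then patching with Lemma~\ref{l4.1}---matches the paper. But the energy-approximation step contains a genuine gap, and this is exactly the step the paper handles differently.

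You claim that from $H\psi=E\psi$, $\psi(0)=1$, $|\psi(n)|\le(1+|n|)^C$, and the decay~(\ref{c15}), restriction to $[-N_1,N_1]$ forces $E$ within $e^{-\rho N_1/4}$ of $\mathrm{spec}\,H_{[-N_1,N_1]}(x_0)$. This is false: the boundary vector $R_{[-N_1,N_1]}H R_{\mathbb{Z}\setminus[-N_1,N_1]}\psi$ has entries of size $\sim N_1^{C}$ near the endpoints (since $\psi$ is only polynomially bounded there), so the naive restriction yields $\mathrm{dist}(E,\mathrm{spec})\lesssim N_1^{C}$, which is useless. Even if your bound held, with your scaling $N_1=(\log N)^{C_2}$ the precision $e^{-\rho N_1/4}$ is far too weak to perturb a scale-$N$ Green's function whose norm may be $e^{N^{1-}}$; you need precision $e^{-cN}$.

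The paper obtains this precision by a bootstrap you omit. It takes $N_1=N^{C_1}$ (polynomially large, not logarithmic), and first uses the deterministic sparsity of bad sites (the bound~(\ref{b30}) inside Proposition~\ref{p2.4}) to locate a \emph{good} interval $I\subset[0,N_1]$ of length $N$ with $T^jx_0\notin\Omega(E)$ for $j\in I\cup(-I)$. The Green's function estimate on $[j-N,j+N]$ then gives $|\xi_j|<e^{-\rho N/200}$ for $j\in I$. Only now, using $\xi_0=1$ and this decay at the boundary $\pm j_0$ (with $j_0$ the center of $I$), one forces $\|G_{[-j_0,j_0]}(x_0,E)\|>e^{\rho N/500}$, hence $\mathrm{dist}(E,\mathrm{spec}\,H_{[-j_0,j_0]}(x_0))<e^{-\rho N/500}$. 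This ``center of localization'' argument is the missing idea; without it the semi-algebraic set you assemble does not control the actual energy $E$.
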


\begin{proof}

By Shnol's theorem \cite{H},
to establish Anderson localization, it suffices to show that if $\xi=(\xi_n)_{n\in\mathbb{Z}},E\in\mathbb{R}$ satisfy
\begin{equation}\label{c16}
\xi_0=1, |\xi_n|<C|n|,\quad |n|\rightarrow\infty,
\end{equation}
\begin{equation}\label{c17}
H(x_0)\xi=E\xi,
\end{equation}
then
\begin{equation}\label{c18}
|\xi_n|<e^{-c|n|},\quad |n|\rightarrow\infty.
\end{equation}

Denote $\Omega=\Omega(E)\subset\mathbb{T}^{2}$ the set of $x$ such that
\begin{equation}\label{c19}
  |G_{[-N,N]}(x,E)(m,n)|<e^{N^{1-}-\frac{\rho}{100}|m-n|\chi_{|m-n|>\frac{N}{10}}}
\end{equation}
fails for some $|m|,|n|\leq N$.
Let $N_1=N^{C_{1}}$, $C_{1}$ is a sufficiently large constant.
Then by Proposition \ref{p2.4},
\begin{equation}\label{c20}
 {\rm mes} \Omega(E)<e^{-N^{\sigma}},
\end{equation}
\begin{equation}\label{c21}
\#\{|j|\leq N_1|T^{j}x_{0}\in\Omega\}<N_1^{1-\delta}.
\end{equation}
So, we may find an interval $I\subset[0,N_1]$ of size $N$ such that
\begin{equation}\label{c22}
T^{j}x_{0}\notin\Omega, \quad\forall j\in I \cup(-I).
\end{equation}
Hence
\begin{equation}\label{c23}
|G_{[j-N, j+N]}(x_{0},E)(m,n)|<e^{N^{1-}-\frac{\rho}{100}|m-n|\chi_{|m-n|>\frac{N}{10}}},\quad m,n\in [j-N, j+N].
\end{equation}
By (\ref{c16}), (\ref{c17}), (\ref{c23}), we have
\begin{equation}\label{c24}
|\xi_{j}|\leq C\sum_{n_{1}\in [j-N, j+N],n_{2}\notin [j-N, j+N]}e^{N^{1-}-\frac{\rho}{100}|j-n_{1}|\chi_{|j-n_{1}|>\frac{N}{10}}}e^{-\rho|n_{1}-n_{2}|}|n_{2}|<e^{-\frac{\rho}{200}N}.
\end{equation}
Denoting $j_0$ the center of $I$, we have
\begin{equation}\label{c25}
1=\xi_{0}\leq\|G_{[-j_0,j_0]}(x_0,E)\|\|R_{[-j_0,j_0]}H(x_0)R_{\mathbb{Z}\setminus[-j_0,j_0]}\xi\|.
\end{equation}

By (\ref{c16}), (\ref{c24}), we have for $|n|\leq j_0$,
\begin{equation}\label{c26}
|(R_{[-j_0,j_0]}H(x_0)R_{\mathbb{Z}\setminus[-j_0,j_0]}\xi)_{n}|\leq\sum_{|n_{1}|>j_0}e^{-\rho|n-n_{1}|}|\xi_{n_1}|
\end{equation}
\begin{equation*}
  \leq\sum_{j_0<|n_1|\leq j_0+\frac{N}{2}}e^{-\rho|n-n_1|}e^{-\frac{\rho}{200}N}+C\sum_{|n_1|>j_0+\frac{N}{2}}e^{-\rho|n-n_1|}|n_1|
<e^{-\frac{\rho}{400}N}.
\end{equation*}

By (\ref{c25}), (\ref{c26}),
\begin{equation}\label{c27}
\|G_{[-j_0,j_0]}(x_0,E)\|>e^{\frac{\rho}{500}N},
\end{equation}
hence
\begin{equation}\label{c28}
{\rm dist}(E, {\rm spec} H_{[-j_0,j_0]}(x_0))<e^{-\frac{\rho}{500}N}.
\end{equation}

Denote
\begin{equation}\label{c29}
\mathcal{E}_{\omega}=\bigcup_{|j|\leq N_1}{\rm spec} H_{[-j_0,j_0]}(x_0).
\end{equation}
It follows from (\ref{c28}) that if $x\notin\bigcup\limits_{E'\in\mathcal{E}_{\omega}}\Omega(E')$,
then
\begin{equation}\label{c30}
  |G_{[-N,N]}(x,E)(m,n)|<e^{N^{1-}-\frac{\rho}{100}|m-n|\chi_{|m-n|>\frac{N}{10}}} ,\quad |m|,|n|\leq N.
\end{equation}

Consider the set $S\subset\mathbb{T}^{3}\times\mathbb{R}$ of $(\omega,x,E')$, where
\begin{equation}\label{c31}
 \| k\omega\|>c|k|^{-2},\quad \forall 0<|k|\leq N,
\end{equation}
\begin{equation}\label{c32}
x\in\Omega(E'),
\end{equation}
\begin{equation}\label{c33}
E'\in\mathcal{E}_{\omega}.
\end{equation}

By Proposition \ref{p3.1},
\begin{equation}\label{c34}
\mbox{${\rm Proj}_{\mathbb{T}^{3}}S$ is a semi-algebraic set of degree $<N^C$},
\end{equation}
and by (\ref{c20}),
\begin{equation}\label{c35}
{\rm mes}({\rm Proj}_{\mathbb{T}^{3}}S)<e^{-\frac{1}{2}N^{\sigma}}.
\end{equation}

Let $N_2=e^{(\log N)^{2}}$,
\begin{equation}\label{c36}
\mathcal{R}_{N}=\{\omega\in\mathbb{T}\mid(\omega,T^{j}x_{0})\in {\rm Proj}_{\mathbb{T}^{3}}S, \ \exists|j|\sim N_2\}.
\end{equation}
By (\ref{c34}), (\ref{c35}), (\ref{c36}), using Lemma \ref{l3.3}, ${\rm mes}\mathcal{R}_{N}<N_2^{-c},c>0$.
Let
\begin{equation}\label{c37}
\mathcal{R}=\bigcap_{N_0\geq 1}\bigcup_{N\geq N_0}\mathcal{R}_{N},
\end{equation}
then ${\rm mes}\mathcal{R}=0$.
We restrict $\omega\notin\mathcal{R}$.

If $\omega\notin\mathcal{R}_N$, we have for all $|j|\sim N_2, \ (\omega,T^{j}x_{0})\notin {\rm Proj}_{\mathbb{T}^{3}}S$,
by (\ref{c30}),
\begin{equation}\label{c38}
  |G_{[j-N,j+N]}(x_0,E)(m,n)|<e^{N^{1-}-\frac{\rho}{100}|m-n|\chi_{|m-n|>\frac{N}{10}}},\quad m,n\in [j-N, j+N].
\end{equation}

Let $\Lambda=\bigcup\limits_{\frac{1}{4}N_2<j<4N_2}[j-N,j+N]\supset\left[\frac{1}{4}N_2,4N_2\right]$, by Lemma \ref{l4.1}, we deduce from (\ref{c38}) that
\begin{equation}\label{c39}
  |G_{\Lambda}(x_0,E)(m,n)|<e^{-\frac{\rho}{200}|m-n|}, \quad |m-n|>\frac{N_2}{10},
\end{equation}
and therefore
\begin{equation}\label{c40}
 |\xi_j|<e^{-\frac{\rho}{4000}|j|},\quad  \frac{1}{2}N_2\leq j\leq N_2 .
\end{equation}

Since $\omega\notin\mathcal{R}$, by (\ref{c37}), there is some $N_0>0$, such that for all $N\geq N_0,\omega\notin\mathcal{R}_N$.
So, (\ref{c40}) holds for $j\in\bigcup\limits_{N\geq N_0}[\frac{1}{2}e^{(\log N)^{2}},e^{(\log N)^{2}}]=[\frac{1}{2}e^{(\log N_0)^{2}},\infty)$.
This proves (\ref{c18}) for $j>0$, similarly for $j<0$. Hence
Theorem \ref{t4.2} follows.
\end{proof}

\subsection*{Acknowledgment}
This paper was supported by  National Natural
Science Foundation of China (No. 11790272 and No. 11771093).


\begin{thebibliography}{10}

\bibitem{B02}
J.~Bourgain.
\newblock Estimates on {G}reen's functions, localization and the quantum kicked rotor model.
\newblock {\em Ann. of Math. (2)}, 156(1):249--294, 2002.

\bibitem{B05}
J.~Bourgain.
\newblock {\em Green's function estimates for lattice {S}chr\"odinger operators
  and applications}, volume 158 of {\em Annals of Mathematics Studies}.
\newblock Princeton University Press, Princeton, NJ, 2005.

\bibitem{B07}
J.~Bourgain.
\newblock Anderson localization for quasi-periodic lattice {S}chr\"odinger operators on {$\Bbb Z^d$}, {$d$} arbitrary.
\newblock {\em Geom. Funct. Anal.}, 17(3):682--706, 2007.

\bibitem{BG}
J.~Bourgain and M.~Goldstein.
\newblock On nonperturbative localization with quasi-periodic potential.
\newblock {\em Ann. of Math. (2)}, 152(3):835--879, 2000.

\bibitem{BGS1}
J.~Bourgain, M.~Goldstein, and W.~Schlag.
\newblock Anderson localization for {S}chr\"{o}dinger operators on {$\Bbb Z$} with potentials given by the skew-shift.
\newblock {\em Comm. Math. Phys.}, 220(3):583--621, 2001.

\bibitem{BGS2}
J.~Bourgain, M.~Goldstein, and W.~Schlag.
\newblock Anderson localization for {S}chr\"odinger operators on {$\bold Z^2$} with quasi-periodic potential.
\newblock {\em Acta Math.}, 188(1):41--86, 2002.

\bibitem{BJ}
J.~Bourgain and S.~Jitomirskaya.
\newblock Anderson localization for the band model.
\newblock In {\em Geometric aspects of functional analysis}, volume 1745 of {\em Lecture Notes in Math.}, pages 67--79. Springer, Berlin, 2000.

\bibitem{BK}
J.~Bourgain and I.~Kachkovskiy.
\newblock Anderson localization for two interacting quasiperiodic particles.
\newblock {\em Geom. Funct. Anal.}, 29(1):3--43, 2019.

\bibitem{BPR}
S.~Basu, R.~Pollack and M.-F~Roy.
\newblock On the combinatorial and algebraic complexity of quantifier elimination.
\newblock {\em J.ACM}, 43(6):1002--1045, 1996.

\bibitem{CD}
V. A. Chulaevsky and E. I. Dinaburg.
\newblock Methods of {KAM}-theory for long-range quasi-periodic operators on {${\bf Z}^\nu$}. {P}ure point spectrum.
\newblock {\em Comm. Math. Phys.}, 153(3):559--577, 1993.

\bibitem{FGP}
S.~Fishman, D.~Grempel  and R.~Prange.
\newblock Chaos, Quantum Recurrences, and Anderson Localization.
\newblock {\em Phys. Rev. Lett.}, 49(8):509--512, 1982.

\bibitem{H}
R.~Han.
\newblock Shnol's theorem and the spectrum of long range operators.
\newblock {\em Proc. Amer. Math. Soc.}, 147(7): 2887--2897, 2019.

\bibitem{JLS}
S.~Jitomirskaya, W.~Liu, and Y.~Shi.
\newblock Anderson localization for multi-frequency quasi-periodic operators on {$\Bbb Z^d$}.
\newblock {\em arXiv:1908.03805.}, 2019.

\bibitem{JSY}
W.~Jian, Y.~Shi and X.~Yuan.
\newblock Anderson localization for one-frequency quasi-periodic block operators with long-range interactions.
\newblock {\em J. Math. Phys.}, 60(6):063504, 15, 2019.

\bibitem{MJ}
C. A.~Marx and S.~Jitomirskaya.
\newblock Dynamics and spectral theory of quasi-periodic {S}chr\"{o}dinger-type operators.
\newblock {\em Ergodic Theory Dynam. Systems}, 37(8):2353--2393, 2017.


\end{thebibliography}
\end{document}